\newcommand{\cf}{\mathcal{F}}
\newcommand{\nn}{\mathbb{N}}
\newcommand{\zz}{\mathbb{Z}}
\DeclareMathOperator{\dom}{dom}
\DeclareMathOperator{\width}{width}
\DeclareMathOperator{\height}{height}
\spnewtheorem{thm}[theorem]{Theorem}{\bfseries}{\itshape}
\spnewtheorem{lem}[theorem]{Lemma}{\bfseries}{\itshape}
\spnewtheorem{pro}[theorem]{Proposition}{\bfseries}{\itshape}
\spnewtheorem{dfn}[theorem]{Definition}{\bfseries}{\rm}
\spnewtheorem*{rem}{Remark}{\bfseries}{\rm}
\begin{document}

\title{On the entropy of coverable subshifts}

\author{Guilhem Gamard%
  \thanks{This author was funded by Russian Academic Excellence
    Project `5-100'. He has since moved to LIP, ENS Lyon, 46 all\'ee
    d'Italie, 69364, France.}}

\institute{Higher School of Economics. \\
  Office 510, Kochnovsky Proezd 3, Moscow 125319, Russia. \\ %
  Email: \email{guilhem.gamard@normale.fr}}

\authorrunning{G. Gamard}

\maketitle

\begin{abstract}
  A coloration $w$ of $\mathbb{Z}^2$ is said to be \emph{coverable} if
  there exists a rectangular block $q$ such that $w$ is covered with
  occurrences of $q$, possibly overlapping. In this case, $q$ is a
  \emph{cover} of $w$. A subshift is said to have the cover $q$ if
  each of its points has the cover $q$. In a previous article, we
  characterized the covers that force subshifts to be finite (in
  particular, all configurations are periodic). We also noticed that
  some covers force subshifts to have zero topological entropy while
  not forcing them to be finite.  In the current paper we work towards
  characterizing precisely covers which force a subshift to have zero
  entropy, but not necessarily periodicity.  We give a necessary
  condition and a sufficient condition which are close, but not quite
  identical.

  \keywords{Subshifts \and SFT \and entropy \and quasiperiodicity \and coverability}
\end{abstract}

\section{Introduction}

A \emph{subshift} is a language of infinite words defined by forbidden
factors; for instance, the set of infinite words over $\{a,b\}$ that
do not contain the factors $bb$ nor $aaa$ is a subshift. This specific
kind of language was initially introduced in the context of dynamical
systems~\cite{MorseHedlund1940}; indeed, a subshift equipped
with the ``shift'' action (translate each letter one step to the left)
is a topological dynamical system. On one hand, subshifts viewed as
dynamical systems can be studied with the tools of combinatorics on
words; on the other hand, several systems of interest are conjugate to
subshifts (so they share the same topological invariants). Thus
subshifts make a useful connection between these two fields.

The definition of a subshift is very easy to generalize to higher
dimensions, e.g. to $\zz^2$-words. The \emph{two-dimensional case}
gives rise to a rich theory, that is connected with tilings and
computability. Two-dimensional subshifts may also be used to model
dynamical systems with two commuting actions.

If a subshift is defined by finitely many forbidden factors, then it
has a finite description. Such languages are called \emph{subshifts of
  finite type}, or SFT for short. The restriction to finite type is
worth considering: two-dimensional SFTs are rich enough to encode
objects such as, for instance, Wang tiles, Turing
machines~\cite{Robinson1971}, or physical models such as the square
ice~\cite{Lieb1967}. Most questions that we could ask on SFTs (e.g.,
emptiness, equality) are decidable in dimension~$1$, but undecidable
in dimension~$2$ and higher.

In this article, we focus on two-dimensional SFTs.

A classical problem is to compute the \emph{topological entropy} of
two-dimensional SFTs. From a dynamics point of view, entropy is the
average number of ``information bits'' encoded in each point of the
system. From the combinatorial perspective, on the other hand, entropy
is connected with factor complexity, i.e., the number of factors of
length $n$ (or squares of size $n \times n$ in 2D) occurring in the
subshift in function of $n$.  Finally, topological entropy is
connected with the notion of \emph{residual entropy} in physics.

It is not possible to compute the entropy of an arbitrary SFT $X$,
because we need some information about $X$. This paper considers the
class of \emph{coverable} subshifts. A $\zz^2$-word $w$ \emph{has the
  cover} $q$ if $q$ is a rectangular block and $w$ is covered with
occurrences of $q$, possibly overlapping. A subshift \emph{has the
  cover} $q$ if each of its elements has the cover $q$. (Note that the
term \emph{quasiperiodic} is sometimes used for \emph{coverable} in
the context of one-dimensional words.)

Our motivations to study coverability are threefold.
1. Overlaps in general are an important part of combinatorics on
one-dimensional words. This article is motivated by the larger project
to build a combinatorics on two-dimensional words. Other work
considering two-dimensional overlaps is also
conducted~\cite{AnselmoGiammarresiMadonia2017}, although not in the
context of subshifts.
2. The definition of coverability may be later relaxed; for instance,
we might allow several covers, where each point of a $\zz^2$-word
would have to be covered by one or the other of those covers. By
relaxing the definition more and more, we might understand the entropy
of larger and larger families of SFTs. However, we have to start this
project with the most constrained definition: one cover of rectangular
shape.
3. The famous Penrose tilings~\cite{Penrose1978} can be described in
terms of a single tile that overlaps itself~\cite{Gummelt1996}.  In
one dimension, the family of standard Sturmian words (which are
right-infinite words) can be characterized in terms of
covers~\cite{GamardRichomme2016}. These examples show that coverable
phenomena are found in otherwise natural examples of words, so it
makes sense to port this definition to SFTs.

Coverability was initially defined on finite words, in the context of
text algorithms~\cite{ApostolicoEhrenfeucht1993}; it was subsequently
generalized to infinite words and
1D-subshifts~\cite{Marcus2004,MarcusMonteil2006}. In parallel, its
connections with morphisms on words and Sturmian words were
throughoutly studied~\cite[and its
references]{GlenLeveRichomme2008}. All this work in one dimension will
provide us with ideas and techniques to study the two-dimensional
case, but the generalization to higher dimensions is far from trivial
and new ideas are also required.

In two dimensions, an efficient algorithm to find the covers of any
finite, square-word is
known~\cite{CrochemoreIliopoulosKorda1998}. Besides, a few properties
of coverable 2D SFTs were proven in previous articles, notably about
entropy, minimality and uniform
frequencies~\cite{GamardRichomme2015,GamardRichomme2017}.

This paper is structured as follows. In Section~\ref{sec:prel}, after
a quick review of basic definitions and notation, we prove that the
language of $q$-coverable configurations is an SFT for all $q$. Then
we try to compute its entropy in function of $q$. Since exact values
are difficult to obtain, we engage in a simpler task: characterize
which $q$'s yield zero entropy and which yield strictly positive
entropy. In Section~\ref{sec:nec}, we give a necessary condition for
strictly positive entropy. Then, in Section~\ref{sec:inter}, we define
\emph{interchangeable pairs}, and show how this notion is useful to
compute enropies of coverable subshifts. Finally, in
Section~\ref{sec:suf}, we give a sufficient condition for positive
entropy, which is close to (but not quite) the negation of the
necessary condition; we also give a lower bound on the entropy for
subshifts that satisfy the sufficient condition. Our conclusion is
Section~\ref{sec:conclu}: we give a few open problems and state our
acknowledgements.

\section{Preliminaries}
\label{sec:prel}

We start by reviewing definitions and notation. A \emph{configuration}
is a coloring of $\zz^2$ whose colors are taken from some finite
alphabet $\Sigma$. A \emph{domain} is a finite subset of $\zz^2$ and a
\emph{fragment} is a coloring of a domain. When we consider a fragment
up to translation, i.e., we are not interested in its position in the
plane, we call it a \emph{pattern}.  A \emph{block} is a pattern $p$
whose domain is a rectangle, i.e., there exists natural (nonnegative)
integers $m,n$ such that
$\dom(p)=\{0, \ldots, m-1\} \times \{0, \ldots, n-1\}$. The number $m$
is the \emph{width} and the number $n$ the \emph{height} of the
rectangle. The \emph{position} of a block is the position of its
bottom, left-hand corner. We note $\Sigma^{m \times n}$ the set of all
blocks of size $m \times n$ over alphabet $\Sigma$. Here are
the intuitive correspondences with the unidimensional case: \\
\centerline{
\begin{tabular}{l l l}
  configuration & $\iff$ & infinite word \\
  pattern, block & $\iff$ & finite word \\
  fragment & $\iff$ & occurrence of a finite word in an infinite word
\end{tabular}} \\
If $D$ is a set (in particular a domain), then $|D|$ denotes the
cardinality of $D$. If $u$ is either a pattern or a fragment, then
$|u|$ denotes the cardinality of $\dom(u)$.

Let $u$ denote a pattern, $w$ a configuration or a pattern, and $D$ a
domain. The notation $w(D)$ refers to the restriction of $w$ to domain
$D$. If $u=w(D)$, (so in particular $\dom(u)=D$ up to translation),
then we say that $u$ \emph{occurs in} $w$.

Let $f$ denote a fragment. Elements of $\zz^2$ are often called
\emph{positions}; if $(i,j)$ belongs to $\dom(f)$, then we say that
$f$ \emph{covers} the position $(i,j)$. Two fragments said to be
\emph{neighbouring} if the union of their domains is simply connected
(counting only vertical and horizontal neighbours) and if they agree
on the intersections of their domains (which might be empty). If
moreover the intersection of their domains is not empty, then we say
that they \emph{overlap}.

\begin{dfn}
  Let $q$ denote a block. A fragment, pattern, or a configuration $w$
  is said to be $q$-\emph{coverable} if each position of its domain is
  covered by a copy of $q$. Formally, there exist domains
  $D_1, \ldots, D_n$ (possibly $n=\infty$) such that
  $\dom(w)=\bigcup_{i=1}^n D_i$ and $w(D_i)$ is equal to $q$ up to
  translation for all $i$.
\end{dfn}

\begin{dfn}
  A set of configurations $X$ is a subshift if and only if there
  exists a set of patterns $\cf$ such that $X$ is the set of
  configurations in which no element of $\cf$ occur, i.e.,
  $X = \{x \in \Sigma^{\zz^2} \,|\, \forall D \subseteq \zz^2, x(D)
  \not\in \cf\}$.

  Note that two different sets of forbidden patterns might yield the
  same subshift. If $\cf$ can be made finite, then $X$ is said to be a
  \emph{subshift of finite type} (or SFT for short).
\end{dfn}

Each subshift is stable by translation: if $X$ is a subshift,
$x \in X$, $y \in \Sigma^{\zz^2}$ and there is a vector $\vec{v}$ such that
$y(\vec{u}) = x(\vec{u}+\vec{v})$ for all $\vec{u}$, then $y \in X$.

\begin{pro} \label{pro:qpsft}
  Given a block $q$, the set of all $q$-coverable configurations
  is a subshift of finite type, that we note $X_q$.
\end{pro}

\begin{rem} \label{rem:easyq} %
  Let $w$ denote a fragment or a configuration and $(x,y)$ the
  position of a copy of $q$ in $w$. Then that copy of $q$ covers the
  position $(0,0)$ if and only if $-|w|+1 \leq x \leq 0$ and
  $-|h|+1 \leq y \leq 0$.
\end{rem}

\begin{proof}[of Proposition~\ref{pro:qpsft}]
  Let $(w,h)$ denote the dimensions of $\dom(q)$.  Define the domain
  $D= \{-|w|+1, \dots, |w|-1\} \times \{-|h|+1, \dots, |h|-1\}$ and
  $\cf$ the set of patterns with domain $D$ that do not contain any
  occurrence of $q$. We show that $q$-coverable configurations are
  exactly the configurations that avoid the patterns in $\cf$.

  In an arbitrary configuration $w$ covered by $q$, each block of size
  $(2|w|-1, 2|h|-1)$ contains at least one occurrence of $q$. Suppose
  not; we can assume without loss of generality that our faulty block
  is at position $(-|w|+1,-|h|+1)$. Then, by the remark above, the
  position $(0,0)$ is not covered by $q$ in $w$: a contradiction.

  Conversely, if $w$ is a configuration \emph{not} covered by $q$,
  then some pattern from $\cf$ occurs in it. Indeed, there exists a
  position in $w$ which is not covered by $q$; assume without loss of
  generality that this position is $(0,0)$. Then by the remark above,
  $w(D)$ does not contain any occurrence of $q$, so it belongs to
  $\cf$.

  As a conclusion, $X_q$ is the set of configurations that do not
  contain any pattern in $\cf$. Since all forbidden blocks have domain
  $D$, which is a finite set, the set $\cf$ itself is finite, so $X_q$
  is a subshift of finite type.
\end{proof}

\begin{dfn}
  If $r$ is a block and $m,n$ natural integers, we call
  $r^{m \times n}$ the pattern made of $m$ copies of $r$ concatenated
  horizontally, repeated $n$ times vertically. If $q$ can be written
  $r^{n \times m}$ for some strictly positive integers $m,n$, then we
  say that $r$ is a \emph{root} of $q$. Since $q = q^{1 \times 1}$,
  the block $q$ is always a root of $q$. If $q$ has no root besides
  itself, we say that it is \emph{primitive}.
\end{dfn}

When $q$ is nonprimitive, it has an unique primitive root which is
root of all other roots~\cite[Lemma~4]{GamardRichomme2015}.

\begin{dfn}
  If $w$ and $u$ are two different blocks such that $u$ occurs in two
  opposite corners of $w$, then $u$ is called a \emph{border} of
  $w$.
\end{dfn}

For instance, $a$, $b$ and $b^{2 \times 2}$ are borders of
$\;\begin{smallmatrix}
  a & b & b \\
  b & b & b \\
  b & b & a
\end{smallmatrix}$. Observe that a block is never a border of itself,
but the empty block is a border of all blocks.

\begin{thm}[{See~\cite[Theorem~5]{GamardRichomme2015}}]
  Let $q$ denote a block; then $X_q$ is infinite if and only if the
  primitive root of $q$ has a nonempty border.
\end{thm}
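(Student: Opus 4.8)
The plan is to prove the two implications separately, using as a bridge the elementary observation that a block $b$ has a nonempty border if and only if it admits a \emph{nontrivial self-overlap}: there is a vector $\vec{v}\neq\vec{0}$ such that the translate $b+\vec{v}$ agrees with $b$ on the (nonempty) intersection of their domains. Indeed, writing out the intersection of $\dom(b)$ with $\dom(b)+\vec{v}$ shows that the overlapping region consists of a pair of opposite corners of $b$, and the agreement condition there is precisely that the two corresponding corner blocks coincide, i.e.\ that they form a border. I will use this dictionary between borders and self-overlaps freely in both directions; it is the two-dimensional analogue of the classical correspondence between borders and periods of a word.

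For the easy direction (a nonempty border forces $X_q$ to be infinite), let $r$ be the primitive root of $q$ and suppose $r$ has a nonempty border $u$, say in two opposite corners. Since $q = r^{a\times b}$, the same pattern $u$ still sits in the corresponding opposite corners of $q$, so $u$ is a border of $q$ as well, and the border--overlap dictionary yields a self-overlap vector $\vec{v}$ for $q$ whose overlap region is large. Because $u$ is a border of the \emph{primitive} root, this overlap is not an alignment of $q$ by one of its lattice periods; in particular, sliding a copy of $q$ by $\vec{v}$ genuinely changes the coloring. I would then build an infinite family of pairwise distinct coverable configurations by starting from the periodic lattice tiling by $q$ and, independently along an infinite family of parallel seams, choosing for each seam whether to place the next copy in its aligned position or in the $\vec{v}$-overlapped position. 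The verification reduces to two routine points: that every such choice still extends to a covering of all of $\zz^2$, and that two different choice sequences yield two different configurations (this is where primitivity of $r$ is used, to guarantee that an overlapping placement really alters some cell). As the choice sequences are uncountable, $X_q$ is infinite.

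For the converse I prove the contrapositive: if the primitive root $r$ of $q$ has no nonempty border, then $X_q$ is finite. First, every $q$-coverable configuration is $r$-coverable, since each copy of $q = r^{a\times b}$ is itself tiled by copies of $r$; hence $X_q \subseteq X_r$ and it suffices to bound $X_r$. By the border--overlap dictionary, the hypothesis says that $r$ has no nontrivial self-overlap, so in any $r$-covering two copies that share a cell must be equal: the copies used are pairwise non-overlapping. The core of the argument is then to show that such a non-overlapping covering is forced to be a single translate of the canonical lattice tiling by $\dom(r)$, so that the whole configuration is determined by the residue of one copy's position modulo $(\width(r),\height(r))$. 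This leaves only $\width(r)\cdot\height(r)$ possibilities, making $X_r$, and a fortiori $X_q$, finite.

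The step I expect to be the main obstacle is exactly this last rigidity claim: ruling out ``offset'' non-overlapping tilings of the plane by the rectangle $\dom(r)$. Here primitivity and unborderedness of $r$ must be combined to show that any misalignment between two neighbouring copies would exhibit a self-overlap, hence a nonempty border of $r$, contradicting the hypothesis; propagating this forced alignment across the whole plane then yields the global lattice structure. This is the delicate combinatorial heart of the proof, and it is precisely where the two-dimensional case genuinely departs from its one-dimensional counterpart, in which no such offset phenomenon can occur.
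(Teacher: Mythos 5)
The paper itself contains no proof of this statement: it is imported verbatim from the earlier article \cite{GamardRichomme2015}, so there is nothing internal to compare your argument against, and your proposal has to be judged on its own terms. Your ``border--overlap dictionary'' is fine, and your first direction (nonempty border of the primitive root implies $X_q$ infinite) is workable in outline, although the seam construction is not as routine as you suggest: overlapping a staircase of $\vec{v}$-shifted copies against an aligned lattice generally requires agreement on the \emph{complementary} corner region as well, which is not granted by the hypothesis, so the consistency check you defer is a real issue (repairable, but not by the argument as sketched).

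The fatal problem is the converse, and it sits exactly at the step you flagged as the ``delicate combinatorial heart'': the rigidity claim is not merely hard, it is false. A non-overlapping covering of $\zz^2$ by translates of a rectangle is \emph{not} forced to be a single lattice tiling. Brick-wall coverings exist for every rectangle: stack horizontal strips of height $\height(r)$, fill each strip with abutting copies of $r$, and slide each strip horizontally by an arbitrary independent offset. Two misaligned neighbouring copies in such a covering share no cell, so your dictionary produces no self-overlap and no border --- there are simply no overlaps anywhere to which it could be applied. Concretely, take $q=r$ to be the $2\times 1$ block $ab$ with $a\neq b$: it is primitive with empty border, yet for every sequence $(o_j)_{j\in\zz}\in\{0,1\}^{\zz}$ the configuration whose row $j$ equals $(ab)^{\zz}$ shifted by $o_j$ is $q$-coverable, and distinct sequences give distinct colorings. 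So $X_q$ is uncountable, and with the definition of coverability used in this paper (every position covered by some occurrence, occurrences placed freely in $\zz^2$) the finiteness conclusion itself fails; your plan of ``propagating forced alignment across the plane'' cannot be executed because the per-row phase freedom is genuine. This is also precisely where the contrast with dimension one lies: a gap-free, non-overlapping covering of $\zz$ by an interval of length $m$ forces consecutive occurrences to sit on a single arithmetic progression, whereas in $\zz^2$ each row of bricks carries its own progression. The moral is that the statement as transcribed here does not survive the passage from the setting of \cite{GamardRichomme2015} (where an anchored, corner-pinned notion of two-dimensional word eliminates these phase choices) to bi-infinite $\zz^2$-configurations, so no refinement of your rigidity argument can close the gap --- the ``only if'' direction would first have to be reformulated.
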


Generally speaking we are interested in getting more results which
describe $X_q$, given properties on $q$. In this paper, we focus on
the \emph{topological entropy} which is, intuitively, the average
number of bits necessary to encode one cell of one configuration in
$X$.

\begin{dfn}
  Let $X$ denote a subshift and $L(X)$ the set of all blocks occuring
  in a configuration of $X$, i.e.,
  $L(X) = \{p \;|\; \exists x \in X,\; p \mbox{ occurs in } x\}$.
  Define $L_{m,n}(X)$ to be the set of blocks of size $m \times n$
  occurring in $X$, that is to say,
  $L_{m,n}(X) = L(X) \cap \Sigma^{m \times n}$. The \emph{topological
    entropy} of $X$ is the number:
  \begin{equation*}
    h(X) = \lim_{n \to \infty} \frac{\log_2(|L_{n,n}(X)|)}{n^2}
  \end{equation*}
\end{dfn}

This limit always exists. (In the one-dimensional case, the sequence
$\log(|L_n(X)|)$ is subadditive, so by the subadditivity lemma
$\log(|L_n(X)|)/n$ coverges. The passage to higher dimensions boils
down to a computation.) Observe that if
$|L_{n,n}(X)| \sim \varepsilon^{n^2}$ for some positive $\varepsilon$,
then $h(X) = \log_2 \varepsilon$. Otherwise, if $|L_{n,n}(X)|$ grows
slower than any $\varepsilon^{n^2}$, then $h(X) = 0$. Thus the maximal
value for the topological entropy is $\log_2 |\Sigma|$, because there
is at most $|\Sigma|^{n^2}$ elements in $L_{n,n}(X)$.

\section{A necessary condition for strictly positive entropy}
\label{sec:nec}

\begin{dfn}
  \label{dfn:starcond}
  If a block has two borders in consecutive corners that are neighbouring
  (see Figure~\ref{fig:starcond}), then we say that it satisfies the condition~$(*)$.
\end{dfn}

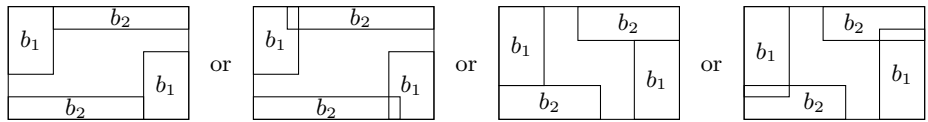
\begin{figure}[htbp]
  \centering
  \begin{tikzpicture}[scale=0.3,baseline={([yshift=-.8ex]current bounding box.center)}]
    \draw (0,0) rectangle (8,5);
    \draw (0,5) rectangle node{$b_1$} ++(2,-3);
    \draw (8,0) rectangle node{$b_1$} ++(-2,3);
    \draw (0,0) rectangle node{$b_2$} ++(6,1);
    \draw (8,5) rectangle node{$b_2$} ++(-6,-1);
  \end{tikzpicture}
  \hfill
  or
  \hfill
  \begin{tikzpicture}[scale=0.3,baseline={([yshift=-.8ex]current bounding box.center)}]
    \draw (0,0) rectangle (8,5);
    \draw (0,5) rectangle node{$b_1$} ++(2,-3);
    \draw (8,0) rectangle node{$b_1$} ++(-2,3);
    \draw (0,0) rectangle node{$b_2$} ++(6.5,1);
    \draw (8,5) rectangle node{$b_2$} ++(-6.5,-1);
  \end{tikzpicture}
  \hfill
  or
  \hfill
  \begin{tikzpicture}[scale=0.3,baseline={([yshift=-.8ex]current bounding box.center)}]
    \draw (0,0) rectangle (8,5);
    \draw (0,5) rectangle node{$b_1$} ++(2,-3.5);
    \draw (8,0) rectangle node{$b_1$} ++(-2,3.5);
    \draw (0,0) rectangle node{$b_2$} ++(4.5,1.5);
    \draw (8,5) rectangle node{$b_2$} ++(-4.5,-1.5);
  \end{tikzpicture}
  \hfill
  or
  \hfill
  \begin{tikzpicture}[scale=0.3,baseline={([yshift=-.8ex]current bounding box.center)}]
    \draw (0,0) rectangle (8,5);
    \draw (0,5) rectangle node{$b_1$} ++(2,-4);
    \draw (8,0) rectangle node{$b_1$} ++(-2,4);
    \draw (0,0) rectangle node[right]{$b_2$} ++(4.5,1.5);
    \draw (8,5) rectangle node[left]{$b_2$} ++(-4.5,-1.5);
  \end{tikzpicture}
  \caption{Illustration of the condition~$(*)$.}
  \label{fig:starcond}
\end{figure}

Let $q$ denote an arbitrary block. If $q$ has a full-width or
full-height border, i.e. a border having the same width (resp. the
same height) as $q$ itself, then $q$ satisfies the
condition~$(*)$. Indeed, this full-width or full-height border occurs
in two consecutive corners, and it obviously overlaps with itself.

\begin{thm}
  \label{thm:main}
  Let $q$ denote a primitive block. If $X_q$ has positive entropy, then
  $q$ satisfies the condition~$(*)$.
\end{thm}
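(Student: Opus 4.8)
The plan is to prove the contrapositive: assuming that the primitive block $q$ does \emph{not} satisfy condition~$(*)$, I will show that $h(X_q)=0$. By the definition of entropy it suffices to bound $|L_{n,n}(X_q)|$ by $|\Sigma|^{O(n)}$, since then $\log_2|L_{n,n}(X_q)|/n^2 \to 0$. To obtain such a bound I will show that an $n\times n$ pattern of $X_q$ is determined by its restriction to a bottom strip of bounded height, together with at most subexponentially many (in $n$) further choices used to build the strip upward row by row.

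First I would make explicit the dictionary between overlaps of copies of $q$ and borders. Writing $(W,H)$ for the dimensions of $q$, two overlapping copies that agree on their intersection share a block occurring in two corners of $q$: a shift in the down-right direction corresponds to a border occurring in the top-left and bottom-right corners, and a shift in the up-right direction to a border in the top-right and bottom-left corners. Using Remark~\ref{rem:easyq} to control which copies can cover a given cell, and the window of Proposition~\ref{pro:qpsft} to keep everything local, I would catalogue, for a fixed cell on a horizontal interface, the finitely many admissible positions of a copy of $q$ covering it, and read off from each position the border that its overlap with the interface forces.

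The structural heart of the argument is then a rigidity lemma: since $q$ fails $(*)$, no border in a top corner is neighbouring with a border in the adjacent bottom corner; in particular (as already noted) $q$ has neither a full-width nor a full-height border. I would use this to show that when one covers the row immediately above a known strip, the copies that dip into the strip are pinned by it, while the remaining \emph{floating} copies sitting strictly above cannot be combined with the anchored ones in exponentially many ways: any such combination would exhibit a down-right border and an up-right border meeting neighbouringly in consecutive corners, i.e.\ precisely condition~$(*)$. Consequently each new row admits only $|\Sigma|^{O(1)}$ genuinely free cells beyond what the strip already forces, so the whole pattern is determined by the bottom strip (of height $H$, hence at most $|\Sigma|^{Hn}$ possibilities) together with $O(1)$ free choices per row, for a total of $|\Sigma|^{O(n)}$ patterns. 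This yields $h(X_q)=0$.

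The main obstacle is this rigidity lemma, and more precisely the case analysis hidden in ``cannot be combined in exponentially many ways.'' Floating copies need not overlap anchored copies at all, since covering permits gaps bridged by further copies, so I must track chains of overlapping copies crossing the interface and argue that a genuine horizontal degree of freedom propagating upward would force two compatible diagonal borders. Handling all relative positions of the corners, the possibly empty intersections allowed in the definition of \emph{neighbouring}, and the interaction with primitivity of $q$ (which prevents a degenerate periodic root from reintroducing freedom) is where the real work lies; the counting step afterwards is routine.
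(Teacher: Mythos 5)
Your overall shell---contrapositive, a seed of linear size in $n$, deterministic propagation using the failure of $(*)$, and a routine $|\Sigma|^{O(n)}$ count---matches the spirit of the paper's proof, but the proposal has a genuine gap exactly where you locate ``the real work'': the rigidity lemma is asserted, not proved, and it is the entire mathematical content of the theorem. In the paper this is Lemma~\ref{lem:combi}, whose proof is a six-case analysis (overlapping canonical occurrences, non-overlapping ones with horizontal offset $m$, vertical offset $n$, or both, plus auxiliary occurrences at carefully chosen probe cells) in which every deviant placement is shown to produce two neighbouring borders in consecutive corners. Your sketch names the right target (``a horizontal degree of freedom would force two compatible diagonal borders'') but supplies none of this analysis, and your own framing---chains of floating copies crossing the interface, possibly not touching the strip at all---is strictly harder to control than what the paper actually needs, because the paper never tries to pin \emph{all} copies of a cover.

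Moreover, one step of your plan is wrong as stated: ``the copies that dip into the strip are pinned by it'' does not follow from knowing the \emph{letters} of the bottom strip. The content of a strip does not determine the positions of the covering occurrences (distinct covers can induce identical content), so a content-seeded, row-by-row induction never gets off the ground; this is precisely why condition~$(*)$ alone cannot pin anything at the level of letters. The paper's fix is to change the propagated data: define $\rho(i,j)$ as the topmost among the rightmost occurrences of $q$ covering $(i,j)$ (the rightmost one is already unique when $(*)$ fails), prove that $\rho(0,0)$ equals $(0,0)$, $\rho(-1,0)$, or $\rho(0,-1)$, and then seed with the $\rho$-values on an anti-diagonal of $2k+1$ cells, each admitting at most $nm$ values, giving at most $(2k+1)^2(nm)^{2k+1}$ squares of size $k\times k$. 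Your vertical variant (seed $\rho$ on the bottom row plus one cell per new row, i.e.\ the left column) would in fact work once you carry \emph{covering data} rather than content as the inductive state---but that substitution, together with the case analysis establishing the local determinism of $\rho$, is exactly the missing proof, not a detail.
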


The contraposition is also interesting: if $q$ does not satisfy the
condition~$(*)$, then $X_q$ has zero entropy.

If $q$ is nonprimitive, then it has a full-width or a full-height
border and thus it satisfies condition~$(*)$. Besides, if $q$ covers
some pattern or configuration $w$, then so does each root of
$q$. Therefore it makes sense to restrict ourselves to primitive
covers. We could lift this restriction by replacing, in
condition~$(*)$ and elsewhere, the term \emph{border} with
\emph{border that is not a power of the primitive root}. This would
require extra precautions in the proofs while not adding any
significant value to our results, so we will keep the supposition that
$q$ is primitive.

The remainder of this section is devoted to the proof of
Theorem~\ref{thm:main}.

\begin{dfn}
  Let $w$ denote a configuration with a cover $q$. We note by
  $\rho(i,j)$ the topmost among the rightmost occurrences of $q$
  covering position $(i,j)$.
\end{dfn}

This means that we first select the rightmost copies of $q$ covering
position $(i,j)$, and then among them (if there are several) we select
the topmost one. Observe that if $q$ does not satisfy the
condition~$(*)$, then the ``rightmost occurrence of $q$ covering
position $(i,j)$'' is unique: we don't need to select the topmost
one. Indeed, if we had two rightmost occurrences covering $(i,j)$,
they would share the same $x$-coordinate, and either they would be
equal, or $q$ would have a full-width border: as explained above, this
implies the condition~$(*)$.

The vector $\rho$ implicitly depends on $q$ and $w$, but there will be
no ambiguity in what follows.

\begin{lem}
  \label{lem:combi}
  Let $q$ denote a primitive block of size $m \times n$ that does not
  satisfy condition~$(*)$, and $w$ a configuration covered by
  $q$. Then $\rho(0,0)$ is either equal to $(0,0)$, or to
  $\rho(-1,0)$, or to $\rho(0,-1)$. In particular, $\rho(0,0)$ is
  uniquely determined by $\rho(-1,0)$ and $\rho(0,-1)$.
\end{lem}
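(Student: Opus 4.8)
The plan is to write $\rho(0,0)=(a,b)$ for the position (bottom-left corner) of the rightmost copy of $q$ covering the origin, and to argue by cases on the signs of $a$ and $b$. First I would record the covering inequality behind Remark~\ref{rem:easyq}: a copy placed at $(x,y)$ covers $(i,j)$ exactly when $i-m+1\le x\le i$ and $j-n+1\le y\le j$. In particular every copy covering $(0,0)$ has $x\le 0$ and $y\le 0$, so $a\le 0$ and $b\le 0$. Throughout I would lean on the uniqueness already observed before the lemma: since $q$ fails $(*)$, the rightmost copy covering a given position is unique, so $\rho$ is literally ``the rightmost copy'', with no topmost tie-break actually occurring. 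This uniqueness is the one nontrivial input, and it is exactly what the hypothesis ``$q$ does not satisfy $(*)$'' buys us.

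The three cases $a=b=0$, $\;a<0$, and $\;(a=0\text{ and }b<0)$ are exhaustive because $a\le 0$ and $b\le 0$. If $a=b=0$ then $\rho(0,0)=(0,0)$ and there is nothing more to do.

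For the case $a<0$, I would show $\rho(0,0)=\rho(-1,0)$. Since $a\le -1$ and (from covering $(0,0)$) $a\ge -m+1$, the copy $(a,b)$ also covers $(-1,0)$, so the rightmost copy $\rho(-1,0)=(a',b')$ covering $(-1,0)$ satisfies $a'\ge a$. Conversely $\rho(-1,0)$ covers $(0,0)$: covering $(-1,0)$ gives $a'\le -1\le 0$ and $-n+1\le b'\le 0$, while $a'\ge a\ge -m+1$ yields $a'+m-1\ge 0$, so both coordinates meet the covering inequalities at the origin. Maximality of $(a,b)$ among copies covering $(0,0)$ then forces $a'\le a$, hence $a'=a$. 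Now $\rho(0,0)$ and $\rho(-1,0)$ are two copies with the same (maximal) first coordinate $a$ both covering $(0,0)$, so uniqueness makes them coincide.

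The case $a=0$, $b<0$ is symmetric with $\rho(0,-1)$ in place of $\rho(-1,0)$: here $\rho(0,0)=(0,b)$ covers $(0,-1)$, which forces the first coordinate of $\rho(0,-1)$ to equal $0$ (it is $\ge 0$ by rightmost-ness and $\le 0$ because it covers a position with first coordinate $0$); one then checks as above that $\rho(0,-1)$ covers $(0,0)$, and uniqueness gives $\rho(0,0)=\rho(0,-1)$. The final ``in particular'' clause is then immediate, since in every case $\rho(0,0)$ equals one of the constant $(0,0)$, $\rho(-1,0)$, or $\rho(0,-1)$, hence is determined once those two neighbours are known. I expect the only delicate point to be the bookkeeping of covering inequalities, so that each candidate neighbour copy is verified to cover $(0,0)$ \emph{before} maximality and uniqueness are invoked; the conceptual content is simply that a copy pushed strictly left (resp. strictly down) from the corner must already be the rightmost copy of the left (resp. lower) neighbour.
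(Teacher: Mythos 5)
Your proof is correct in substance, but it takes a genuinely different route from the paper. The paper's proof is a six-case geometric analysis keyed on the relative positions of $\rho(-1,0)$ and $\rho(0,-1)$ (equal, containing the origin, overlapping, horizontally adjacent, vertically adjacent, both), and in each nontrivial case it constructs, with the help of auxiliary occurrences covering positions like $(x'-1,y-1)$, a second border neighbouring the first so that condition~$(*)$ would be violated. You instead case-split on the sign of the coordinates of $\rho(0,0)=(a,b)$ itself and use a single input: the uniqueness of the rightmost occurrence covering a fixed position, which the paper establishes in the discussion just before the lemma (two rightmost occurrences share their $x$-coordinate, hence overlap vertically, hence produce a full-width border, hence $(*)$). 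Your case $a<0$ is airtight: $(a,b)$ covers $(-1,0)$, the rightmost copy there satisfies $a'\geq a\geq -m+1$ and $a'\leq -1$, so it covers $(0,0)$, whence $a'=a$ by maximality, and uniqueness at the origin identifies the two copies. This buys a proof an order of magnitude shorter than the paper's, with all the border-construction work quarantined inside the one uniqueness observation; the paper's approach, by contrast, makes the geometry of how $(*)$ arises completely explicit, at the cost of many figures and subcases.

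One step in your vertical case is misjustified as written. You say that ``one then checks as above that $\rho(0,-1)$ covers $(0,0)$,'' but the direct covering inequalities do not deliver this: writing $\rho(0,-1)=(0,b'')$, covering $(0,-1)$ only forces $-n\leq b''\leq -1$, and the value $b''=-n$ is compatible with these inequalities while failing to cover $(0,0)$ (which needs $b''\geq -n+1$). The situation is not symmetric to the case $a<0$, precisely because $\rho$ breaks the $x$/$y$ symmetry: maximality is taken in the $x$-coordinate, so moving down does not mirror moving left. The repair is one line and uses only your own toolkit, applied at the position $(0,-1)$ rather than at the origin: $\rho(0,0)=(0,b)$ covers $(0,-1)$ with $x$-coordinate $0$, which is maximal among copies covering $(0,-1)$; hence $\rho(0,0)$ is itself a rightmost occurrence covering $(0,-1)$, and uniqueness there gives $\rho(0,-1)=\rho(0,0)$ directly --- the claim that $\rho(0,-1)$ covers $(0,0)$ then follows a posteriori instead of being an intermediate step. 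With that reordering your argument is complete. (Your final ``in particular'' clause is asserted at the same level of detail as in the paper, which likewise derives the unique-determination statement directly from the trichotomy, so no differential criticism applies there.)
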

\begin{proof}
  Call $(x,y)$ and $(x',y')$ the positions of $\rho(-1,0)$ and
  $\rho(0,-1)$, respectively, and $(i,j)$ the position of $\rho(0,0)$.
  We have $i \leq 0$ and $j \leq 0$, otherwise $\rho(0,0)$ would not
  cover the position $(0,0)$. There are several cases to consider.

    \begin{figure}[p]
    \centering
    \begin{minipage}[b]{1in}
      \centering
      \begin{tikzpicture}[scale=0.15]
        \draw (0,0) rectangle node[below]{$q$} ++(5,8);
        \draw (-3.5,4.5) rectangle node[above]{$q$} ++(5,8);
        \draw (1.5,8) ++(1.0em,1.0em) node{$\circ$};
      \end{tikzpicture}
      \caption{}\label{subfig:combi1a}	
    \end{minipage}
    \hfill
    \begin{minipage}[b]{1in}
      \centering
      \begin{tikzpicture}[scale=0.15]
        \draw (0,0) rectangle node[below]{$q$} ++(5,8);
        \draw (-3.5,4.5) rectangle node[above]{$q$} ++(5,8);
        \draw (1.5,7) rectangle node[right]{$q$} ++(5,8);
        \draw[pattern=north west lines] (0,4.5) rectangle ++(1.5,3.5);
        \draw[pattern=north east lines] (1.5,7) rectangle ++(3.5,1.0);
      \end{tikzpicture}
      \caption{}\label{subfig:combi1b}
    \end{minipage}
    \hfill
    \begin{minipage}[b]{1in}
      \centering
      \begin{tikzpicture}[scale=0.15]
        \draw (0,0) rectangle node[below]{$q$} ++(5,8);
        \draw (-3.5,4.5) rectangle node[above]{$q$} ++(5,8);
        \draw (-0.5,8) rectangle node[right]{$q$} ++(5,8);
        \draw[pattern=north west lines] (0,4.5) rectangle ++(1.5,3.5);
        \draw[pattern=north east lines] (-0.5,8) rectangle ++(2,4.5);
      \end{tikzpicture}
      \caption{}\label{subfig:combi1c}
    \end{minipage}
    \hfill
    \begin{minipage}[b]{1in}
      \centering
      \begin{tikzpicture}[scale=0.15]
        \draw (0,0) rectangle node[below]{$q$} ++(5,8);
        \draw (-3,6) rectangle node[above]{$q$} ++(5,8);
        \draw (-5,-2) rectangle node[below]{$q$} ++(5,8);
        \draw (-1.5,3) rectangle ++(5,8);
        \draw[pattern=north east lines] (-1.5, 3) rectangle ++(1.5,3);
        \draw[pattern=north west lines] (0, 3) rectangle ++(3.5,5);
        \draw (2,8) ++(0.5,0.5) node{$\circ$};
        \draw (-1,5) ++(0.5,0.5) node{$\bullet$};
      \end{tikzpicture}
      \caption{}\label{subfig:combi1d}
    \end{minipage}
    \begin{center}
      \emph{Case 3.} The origin of the occurrence covering $\circ$ is
      forced to be $\circ$, otherwise the condition~$(*)$ is satisfied
      (see hatched lines).
    \end{center}

    \begin{minipage}[b]{1in}
      \centering
      \begin{tikzpicture}[scale=0.15]
        \draw (0,0) rectangle node{$q$} ++(5,8);
        \draw (-5,3) rectangle node{$q$} ++(5,8);
        \draw (0,8) ++(0.5,0.5) node{$\circ$};
      \end{tikzpicture}
      \caption{}\label{subfig:combi2a}
    \end{minipage}
    \hfill
    \begin{minipage}[b]{1in}
      \centering
      \begin{tikzpicture}[scale=0.15]
        \draw (0,0) rectangle node{$q$} ++(5,8);
        \draw (-5,3) rectangle node{$q$} ++(5,8);
        \draw (-1.5,6) rectangle node[right]{$q$} ++(5,8);
        \draw (0,8) ++(0.5,0.5) node{$\circ$};
        \draw[pattern=north east lines] (-1.5,6) rectangle ++(1.5,5);
        \draw[pattern=north west lines] (0,6) rectangle ++(3.5,2);
      \end{tikzpicture}
      \caption{}\label{subfig:combi2b}
    \end{minipage}
    \hfill
    \begin{minipage}[b]{1in}
      \centering
      \begin{tikzpicture}[scale=0.15]
        \draw (0,0) rectangle node[below]{$q$} ++(5,8);
        \draw (-5,6) rectangle node{$q$} ++(5,8);
        \draw (-1.5,4) rectangle ++(5,8);
        \draw (-5,-2) rectangle node{$q$} ++(5,8);
        \draw (-1,5) ++(0.5,0.5) node{$\bullet$};
        \draw[pattern=north east lines] (-1.5,4) rectangle ++(1.5,2);
        \draw[pattern=north west lines] (-1.5,6) rectangle ++(1.5,6);
        \draw[pattern=north west lines] (0,4) rectangle ++(3.5,4);
        \draw (0,8) ++(0.5,0.5) node{$\circ$};
      \end{tikzpicture}
      \caption{}\label{subfig:combi2c}      
    \end{minipage}
    \hfill
    \begin{minipage}[b]{1in}
      \centering
      \begin{tikzpicture}[scale=0.15]
        \draw (0,0) rectangle node[below]{$q$} ++(5,8);
        \draw (-5,3) rectangle node{$q$} ++(5,8);
        \draw (0,6) rectangle node[above]{$q$} ++(5,8);
        \draw[pattern=north west lines] (0,6) rectangle ++(5,2);
        \draw (0,8) ++(0.5,0.5) node{$\circ$};
        \end{tikzpicture}
      \caption{}\label{subfig:combi2d}
    \end{minipage}

    \begin{center}
      \emph{Case 4.} The origin of the occurrence convering $\circ$ is
      forced.
    \end{center}

    \begin{minipage}[b]{1in}
      \centering
      \begin{tikzpicture}[scale=0.15]
        \draw (1,0) rectangle node{$q$} ++(5,8);
        \draw (-2,8) rectangle node{$q$} ++(5,8);
        \draw (3,8) ++(0.5,0.5) node{$\circ$};
      \end{tikzpicture}
      \caption{}\label{subfig:combi3a}
    \end{minipage}
    \hfill
    \begin{minipage}[b]{1in}
      \centering
      \begin{tikzpicture}[scale=0.15]
        \draw (1,0) rectangle node{$q$} ++(5,8);
        \draw (-2,8) rectangle node{$q$} ++(5,8);
        \draw (2,5) rectangle node[above]{$q$} ++(5,8);
        \draw[pattern=north west lines] (2,5) rectangle ++(4,3);
        \draw[pattern=north east lines] (2,8) rectangle ++(1,5);
        \draw (3,8) ++(0.5,0.5) node{$\circ$};
      \end{tikzpicture}
      \caption{}\label{subfig:combi3b}
    \end{minipage}
    \hfill
    \begin{minipage}[b]{1in}
      \centering
      \begin{tikzpicture}[scale=0.15]
        \draw (1,0) rectangle node{$q$} ++(5,8);
        \draw (-4,0) rectangle node{$q$} ++(5,8);
        \draw (-2,8) rectangle ++(5,8);
        \draw (0,7) rectangle ++(5,8);
        \draw[pattern=north west lines] (0,7) rectangle ++(1,1);
        \draw[pattern=north east lines] (1,7) rectangle ++(4,1);
        \draw[pattern=north east lines] (0,8) rectangle ++(3,7);
        \draw (3,8) ++(0.5,0.5) node{$\circ$};
      \end{tikzpicture}
      \caption{}\label{subfig:combi3c}
    \end{minipage}
    \hfill
    \begin{minipage}[b]{1in}
      \centering
      \begin{tikzpicture}[scale=0.15]
        \draw (1,0) rectangle node{$q$} ++(5,8);
        \draw (-2,8) rectangle node[left]{$q$} ++(5,8);
        \draw (1.5,8) rectangle node[right]{$q$} ++(5,8);
        \draw[pattern=north west lines] (1.5,8) rectangle ++(1.5,8);
        \draw (3,8) ++(0.5,0.5) node{$\circ$};
      \end{tikzpicture}
      \caption{}\label{subfig:combi3d}
    \end{minipage}
    \begin{center}
      \emph{Case 5.} The origin of the occurrence covering $\circ$ is
      forced.
    \end{center}
    \label{fig:combi3}

    \begin{minipage}[b]{1in}
      \centering
      \begin{tikzpicture}[scale=0.15]
        \draw (-5,0) rectangle node{$q$} ++(5,8);
        \draw (0,-8) rectangle node{$q$} ++(5,8);
        \draw (0.5,0.5) node{$\circ$};
      \end{tikzpicture}
      \caption{}\label{subfig:combi4a}
    \end{minipage}
    \hfill
    \begin{minipage}[b]{1in}
      \centering
      \begin{tikzpicture}[scale=0.15]
        \draw (-5,0) rectangle ++(5,8);
        \draw (0,-8) rectangle ++(5,8);
        \draw (-3,-3) rectangle ++(5,8);
        \draw[pattern=north east lines] (-3,0) rectangle ++(3,5);
        \draw[pattern=north east lines] (0,-3) rectangle ++(2,3);
        \draw (0,0) ++(0.5,0.5) node{$\circ$};
        \draw (2,0) ++(0.5,0.5) node{$\star$};
        \draw (0,5) ++(0.5,0.5) node{$\bullet$};
      \end{tikzpicture}
      \caption{}\label{subfig:combi4b}
    \end{minipage}
    \hfill
    \begin{minipage}[b]{1in}
      \centering
      \begin{tikzpicture}[scale=0.15]
        \draw (-5,0) rectangle ++(5,8);
        \draw (0,-8) rectangle ++(5,8);
        \draw (-3,-3) rectangle ++(5,8);
        \draw[pattern=north east lines] (-3,0) rectangle ++(3,5);
        \draw[pattern=north east lines] (0,-3) rectangle ++(2,3);
        \draw (2,-4) rectangle ++(5,8);
        \draw (-2,5) rectangle ++(5,8);
        \draw[pattern=north west lines] (2,-4) rectangle ++(3,4);
        \draw[pattern=north west lines] (-2,5) rectangle ++(2,3);
        \draw (0,0) ++(0.5,0.5) node{$\circ$};
        \draw (2,0) ++(0.5,0.5) node{$\star$};
        \draw (0,5) ++(0.5,0.5) node{$\bullet$};
      \end{tikzpicture}
      \caption{}\label{subfig:combi4c}
    \end{minipage}
    \hfill
    \begin{minipage}[b]{1in}
      \centering
      \begin{tikzpicture}[scale=0.15]
        \draw (-5,0) rectangle ++(5,8);
        \draw (0,-8) rectangle ++(5,8);
        \draw (-3,-3) rectangle ++(5,8);
        \draw[pattern=north east lines] (0,0) rectangle ++(2,5);
        \draw (2,0) rectangle ++(5,8);
        \draw (0,5) rectangle ++(5,8);
        \draw[pattern=north west lines] (2,5) rectangle ++(3,3);
        \draw (2,0) ++(0.5,0.5) node{$\star$};
        \draw (0,5) ++(0.5,0.5) node{$\bullet$};
      \end{tikzpicture}
      \caption{}\label{subfig:combi4d}
    \end{minipage}
    \begin{center}
      \emph{Case 6.} The origin of the occurrence covering $\circ$ is
      forced.
    \end{center}
    \label{fig:combi4}
  \end{figure}

  \smallskip

  \emph{Case 1.} If $(x,y)=(x',y')$, then necessarily we have
  $(x,y)=(x',y')=(i,j)$ by definition of $\rho$ and because $\dom(q)$
  is convex (it is a rectangle).

  \smallskip
  
  \emph{Case 2.} If $\rho(-1,0)$ (respectively $\rho(0,-1)$) contains
  $(0,0)$, then $\rho(0,0)=\rho(-1,0)$ (respectively
  $\rho(0,0)=\rho(0,-1)$) so $\rho(0,0)$ is uniquely determined.
  
  \smallskip
  
  Now we have to consider the cases where the occurrences
  $\rho(0,-1)$, $\rho(-1,0)$ and $\rho(0,0)$ are all different. We
  will prove that, in this situation, the position of $\rho(0,0)$ is
  always $(0,0)$. There are two disjoint cases to consider: either
  $\rho(0,-1)$ and $\rho(-1,0)$ overlap, or they don't.
  
  \smallskip

  \emph{Case 3.}  Suppose that $(x,y)$, $(x',y')$, and $(i,j)$ are all
  different and that the occurrences of $q$ at positions $(x,y)$ and
  $(x',y')$ overlap. We are in the situation of
  Figure~\ref{subfig:combi1a}, where coordinate $(0,0)$ is marked by
  the symbol $\circ$. If $i=0$ and $j < 0$, then
  Figure~\ref{subfig:combi1b} shows how $q$ would satisfy the
  condition~$(*)$: the hatched areas are the two consecutive borders
  which are neighbouring. If $i < 0$ and $j=0$, then
  Figure~\ref{subfig:combi1c} shows how $q$ would satisfy the
  condition~$(*)$. Finally, if both $j < 0$ and $i < 0$, then we are
  in the situation of Figure~\ref{subfig:combi1d}.  Consider
  $(x'', y'') = \rho(x'-1, y-1)$; the coordinates $(x'-1, y-1)$ are
  shown by the symbol $\bullet$ on Figure~\ref{subfig:combi1d}. Then
  the figure shows how the condition~$(*)$ would be satisfied
  again. In this figure, $x''=x'-m$ and $y''=y-n$, but the argument
  works even if $x''$ or $y''$ or both were larger than that.
  
  \smallskip

  We still have to analyse the case where $\rho(-1,0)$ and
  $\rho(0,-1)$ do \emph{not} overlap. There are three possibilities
  here: either $x-x'=m$ (as in Figure~\ref{subfig:combi2a}), or
  $y'-y=n$ (as in Figure~\ref{subfig:combi3a}), or both (as in
  Figure~\ref{subfig:combi4a}). As before, we prove that
  $\rho(0,0)=(0,0)$.
  
  \smallskip

  \emph{Case 4.} Suppose that $\rho(-1,0)$, $\rho(0,-1)$, and
  $\rho(0,0)$ are all different, that $\rho(-1,0)$ and $\rho(0,-1)$ do
  not overlap and that $x'-x=m$. We are in the situation of
  Figure~\ref{subfig:combi2a}. If $i<0$ and $y \leq j < 0$ then the
  condition~$(*)$ is satisfied, as shown on
  Figure~\ref{subfig:combi2b}. If $i<0$ and $j<y$, then let
  $(x'',y'')$ denote the position of $\rho(x'-1,y-1)$; the coordinates
  $(x'-1,y-1)$ are marked by the symbol $\bullet$ on
  Figure~\ref{subfig:combi2c}. This figure shows how the
  condition~$(*)$ is satisfied (on the figure, the minimal $x''$ and
  $y''$ are represented, but the argument still works if $x''$ or
  $y''$ or both are larger). If $i=0$ and $j<0$ then $q$ has a
  full-width border, as shown on Figure~\ref{subfig:combi2d}, so the
  condition~$(*)$ is also satisfied. The only remaining case is
  $(i,j)=(0,0)$.
  
  \smallskip
  
  \emph{Case 5.} Suppose that $\rho(-1,0)$, $\rho(0,-1)$, and
  $\rho(0,0)$ are all different, that $\rho(-1,0)$ and $\rho(0,-1)$ do
  not overlap and that $y'-y=n$. We are in the situation of
  Figure~\ref{subfig:combi3a}. The reasonning is similar to the
  previous case, but the roles of $x$ and $y$ are swapped; see
  Figures~\ref{subfig:combi3b}, \ref{subfig:combi3c}
  and~\ref{subfig:combi3d}.
  
  \smallskip

  \emph{Case 6.} $(x,y) = (-m,0)$ and $(x',y') = (0,-n)$; we are in
  the situation of Figure~\ref{subfig:combi4a}. We show that
  $(i,j)=(0,0)$, so suppose towards a contradiction that
  $(i,j) \neq (0,0)$, as on Figure~\ref{subfig:combi4b}. Let
  $(k,\ell) = \rho(i+m,0)$ and $(k',\ell')=\rho(0,j+n)$; the
  coordinates $(i+m,0)$ and $(0,j+n)$ are respectively shown as a
  $\bullet$ and a $\star$ on Figure~\ref{subfig:combi4b}. If either
  $k<0$ or $\ell'<0$, then the condition~$(*)$ would be satisfied, as
  shown by Figure~\ref{subfig:combi4c}. The only remaining possibility
  is $k=\ell'=0$. But then the condition~$(*)$ is satisfied again, as
  shown on Figure~\ref{subfig:combi4d}.

  \smallskip

  \emph{Conclusion.} We showed that, in each case, either
  $\rho(0,0)=(0,0)$, or $\rho(0,0)=\rho(-1,0)$, or
  $\rho(0,0)=\rho(0,-1)$.  Thus, $\rho(0,0)$ is uniquely determined by
  $\rho(-1,0)$ and $\rho(0,-1)$, and the lemma is proved.
\end{proof}

\begin{proof}[of Theorem~\ref{thm:main}]
  We prove the contraposition of the theorem: if $q$ does not satisfy
  the condition~$(*)$, then $X_q$ has zero topological entropy. Recall
  that $q$ is of size $m \times n$. Consider an arbitrary square
  occurring in a configuration of $X_q$, i.e., an element of
  $L_{k,k}(X_q)$ for some $k \in \mathbb{N}$. This square appears in a
  configuration $w$ in $X_q$, and we can assume without loss of
  generality that it has position $(0,0)$. (Indeed, if a configuration
  belongs to $X_q$, so do all the translations of that configuration.)
  We will bound the number of possibilities for such a square.

  Let $I$ denote $\{(-k,k), (-k+1,k-1), \ldots, (k,-k)\}$ (the darkest
  cells in Figure~\ref{fig:IIpIpp}). Suppose that we know $\rho(i)$
  for each $i$ in $I$. By applying Lemma~\ref{lem:combi} several
  times, we can uniquely determine $\rho(i')$ for each $i'$ in
  $I'=\{(-k+1,k), \ldots, (k,-k+1) \}$ (this is $I$ shifted one cell
  to the right, minus the bottommost cell). This information, in turn,
  determines $\rho(i'')$ for each $i''$ in
  $I''=\{(-k+2,k), \ldots, (k,-k+2)\}$, and so on. By iterating this
  process, we deduce the contents of the whole square
  $\{0, \ldots, k-1\} \times \{0, \ldots, k-1\}$, and even a bit more
  (see Figure~\ref{fig:IIpIpp}). The area that we can determine is not
  shaped like a square in general, and there might have several
  $k \times k$ squares in it. We can locate the desired square with
  two coordinates $(x,y)$ satisfying $0 \leq x,y \leq 2k+1$.

  \begin{figure}[htbp]
    \centering
    \begin{tikzpicture}[scale=0.35]
      \draw[gray] (-4.5,-4.5) grid (6.5,6.5);
      \foreach \i in {-3,...,3} {
        \fill[gray] (-\i,\i) rectangle ++(1,1);
      }
      \foreach \i in {-2,...,3} {
        \fill[gray!50!white] (-\i,\i) ++(1,0) rectangle ++(1,1);
      }
      \foreach \i in {-1,...,3} {
        \fill[gray!75!white] (-\i,\i) ++(2,0) rectangle ++(1,1);
      }
      \draw[very thick] (0,0) rectangle ++(4,4);
      \draw (-2.6,4.5) node{$I$};
      \draw (-1.6,4.5) node{$I'$};
      \draw (-0.6,4.5) node{$I''$};
      \draw (0.9,4.5) node{$\dots$};
      \fill[black] (0,0) circle (5pt);
      \draw (0,0) node[below]{\tiny$0$};
      \draw[thick,<->] (4.5,0) -- node[right]{$k$} (4.5,4);
    \end{tikzpicture}
    \caption{From $\rho(I)$ we deduce $\rho(I'), \rho(I''), \ldots$ and in the end we know
      the contents of the square.}
    \label{fig:IIpIpp}
  \end{figure}
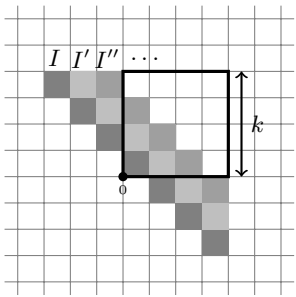

  For each $i$ in $I$, there are $nm$ possibilities for $\rho(i)$, so
  we can compute a bound on the number of $k \times k$ squares in
  $X_q$: we have not more than $u_k = (2k+1)^2 \times (nm)^{2k+1}$
  such squares. The sequence $\log(u_k)/k^2$ converges to $0$ as $k$
  grows to infinity, so the entropy of $X_q$ must be zero. The theorem
  is proved.
\end{proof}

\section{Interchangeable pairs}
\label{sec:inter}

Now our goal is to give a sufficient condition on a block $q$ to force
$X_q$ to have strictly positive entropy. We use a tool called
\emph{interchangeable pairs}, which we define now. In what follows, a
\emph{$q$-patch} is a $q$-coverable pattern.

\begin{dfn}
  \label{dfn:interchg}
  An \emph{interchangeable pair for $q$} is a pair of different
  $q$-patches with the same domain.
\end{dfn}

Let $p_1, p_2$ be an interchangeable pair for $q$ and $w$ a
configuration in $X_q$. Any occurrence of $p_1$ in $w$ can be replaced
with an occurrence of $p_2$; the result would still be a configuration
of $X_q$ (and different from $w$). Hance the name interchangeable
pair.

\begin{pro}
  \label{pro:posent}
  Let $q$ denote a primitive block. If the subshift $X_q$ has strictly
  positive entropy, then there exists an interchangeable pair for $q$.
\end{pro}
\begin{proof}
  By contraposition: suppose that there is no interchangeable pair for
  $q$. Let $n$ denote an integer and $Y$ the set of $q$-patches whose
  domains contain a square of size $n \times n$ and minimal for this
  property. In other terms, each $y$ in $Y$ contains a square of size
  $n \times n$, but if we remove one occurrence of $q$ from $y$, then
  it is not the case anymore. Since there is no interchangable pair
  for $q$, each element of $Y$ is uniquely determined by the shape of
  its domain, thus $Y$ contains not more than $|q|^{4n}$
  elements. Ineed, the shape can be uniquely determined by attaching,
  for each cell in the frontier of the $n \times n$-square, the
  position of an occurrence of $q$ relative to the cell. Any block $c$
  in $L_{n,n}(X_q)$ is uniquely determined by an element of $Y$ and a
  pair of coordinates $(x,y)$ satisfying $0 \leq x,y \leq n + 2|q|-1$,
  so there are not more than
  $v_n = (n+2|q|)^2 \times |q|^{4n}$ possibilities for $w$.
  The sequence $\log(v_n)/n^2$ converges to $0$ as $n$ grows to infinity,
  thus the entropy of $X_q$ is zero.
\end{proof}

\begin{pro}
  \label{pro:interchg}
  Suppose that there exists an interchangeable pair $(p_1,p_2)$ for
  $q$. Suppose further that there are a configuration $w$ in $X_q$,
  strictly positive integers $k,\ell$, and domains $(D_i)_{i \in \nn}$
  such that $\cup_i D_i = \zz^2$, that each $D_i$ is a rectangle of
  size $(k,\ell)$, and that for each $i$ the fragment $w(D_i)$
  contains either an occurrence of $p_1$ or of $p_2$. Then the entropy
  of $X_q$ is at least $(k\ell)^{-1}$.
\end{pro}

\noindent
Note that the condition of this proposition is satisfied, in
particular, when the shape of the interchangeable pair tiles the plane
by translation.

\begin{proof}[of Proposition~\ref{pro:interchg}]
  Let $t$ denote a natural integer and $c$ an arbitrary block of size
  $t \times t$ in $w$. By the assumptions on $w$, there is at least
  $u_t = (\frac{t}{k}-2) \times (\frac{t}{\ell}-2)$ occurrences of
  $\{p_1, p_2\}$ in $c$. Since each occurrence of $p_1$ can be swapped
  to an occurrence of $p_2$, and vice-versa, without leaving $L(X_q)$,
  we have at least $2^{u_t}$ blocks of size $t \times t$ in
  $L(X_q)$. Compute:
  \begin{equation*}
    \lim_{t \to \infty} \frac{\log(2^{u_t})}{t^2} =
    \lim_{t \to \infty} \frac{1}{t^2}(\frac{t^2}{k\ell} - \frac{2t}{k} - \frac{2t}{\ell} + 4) = (k\ell)^{-1}.
  \end{equation*}
  The proposition is proved.
\end{proof}

\section{A sufficient condition for strictly positive entropy}
\label{sec:suf}

Ideally we would like to prove the converse of Theorem~\ref{thm:main},
in order to have a necessary and sufficient condition on the cover to
get a strictly positive entropy coverable subshift. However, it is not
clear whether the condition of this theorem is actually sufficient; it
might be too weak. We define another condition on $q$ which is
stronger than the condition~$(*)$ and which is sufficient for positive
entropy. We prove the following theorem.

\begin{thm}
  \label{thm:main2}
  Let $q$ denote a primitive block with size $(w,h)$. Suppose that $q$
  has two borders $b_1$ and $b_2$ in opposite corners, such that:
  \begin{enumerate}
  \item either $\width(b_1)=\width(b_2)$ and
    $\height(b_1)+\height(b_2) \geq \height(q)$,
  \item or $\height(b_1)=\height(b_2)$ and
    $\width(b_1)+\width(b_2) \geq \width(q)$;
  \end{enumerate}
  then $X_q$ has entropy at least $(9wh)^{-1}$.
\end{thm}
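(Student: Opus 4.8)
The plan is to reduce everything to Proposition~\ref{pro:interchg}. Concretely, it suffices to produce two things: (i) an interchangeable pair $(p_1,p_2)$ for $q$, and (ii) a configuration $w \in X_q$ together with a partition of $\zz^2$ into rectangles of some size $(k,\ell)$ with $k \le 3w$ and $\ell \le 3h$, each containing an occurrence of $p_1$ or of $p_2$; the bound $k\ell \le 9wh$ then makes Proposition~\ref{pro:interchg} yield entropy at least $(k\ell)^{-1} \ge (9wh)^{-1}$. By transposing the plane (exchanging the two coordinates, which sends $X_q$ to a subshift of the same entropy), hypothesis~(2) turns into hypothesis~(1), so I would treat only case~(1): write $a = \width(b_1) = \width(b_2)$, $h_1 = \height(b_1)$, $h_2 = \height(b_2)$, and assume (up to the symmetries of the rectangle) that $b_1$ occupies the top-left and bottom-right corners and $b_2$ the bottom-left and top-right corners, with $h_1 + h_2 \ge h$. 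The borders are nonempty, so $a \ge 1$ and $h_1, h_2 \ge 1$, and I may assume $a < w$ (if $a = w$ then $q$ has a full-width border, a case where positive entropy is easier to obtain directly).

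The geometric content of case~(1) is that the corner copies $b_1^{\mathrm{TL}}$ and $b_2^{\mathrm{BL}}$ together cover the entire left strip $\{0,\dots,a-1\} \times \{0,\dots,h-1\}$ of $q$, precisely because $h_1 + h_2 \ge h$, and symmetrically $b_1^{\mathrm{BR}}, b_2^{\mathrm{TR}}$ cover the right strip. Consequently a copy of $q$ can be continued diagonally in two consistent ways: up-and-to-the-right by the vector $(w-a,\,h-h_2)$ (gluing along $b_2$) or down-and-to-the-right by $(w-a,\,h_1-h)$ (gluing along $b_1$), each overlap being legal exactly because the relevant corner block is a border. I would use these two gluings to write down, inside a box of size at most $3w \times 3h$, a single $q$-coverable domain $D$ in which one interior cell is covered by two \emph{different} translates of $q$ across two legal coverings of $D$; since those translates come from the same primitive block $q$, they assign different colours to that cell, producing two genuinely distinct $q$-patches $p_1, p_2$ with common domain $D$, i.e.\ an interchangeable pair.

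For part~(ii) I would take $w$ to be the periodic staircase configuration obtained by iterating the $b_2$-gluing along the lattice generated by $(w-a,h-h_2)$ together with one further independent vector chosen so that the copies cover all of $\zz^2$ consistently; by construction $w \in X_q$, and a translate of the pattern on $D$ recurs at every lattice point. Partitioning $\zz^2$ into the corresponding fundamental rectangles of size $(k,\ell)$ with $k \le 3w$, $\ell \le 3h$, each containing a copy of $p_1$, places us exactly in the hypothesis of Proposition~\ref{pro:interchg}, which gives the claimed bound.

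The main obstacle is step~(i): exhibiting the interchangeable pair explicitly and proving that \emph{both} members are $q$-coverable on the very same domain while differing in at least one cell. This needs a careful analysis of the corner overlaps — in particular separating $h_1 + h_2 = h$ from $h_1 + h_2 > h$, where an overlap region of height $h_1 + h_2 - h$ on the strips forces additional agreements — and it is also where the constant $9 = 3 \times 3$ is pinned down, since $D$ must be padded with enough neighbouring copies of $q$ so that the two coverings share a common rectangular support of width $\le 3w$ and height $\le 3h$. The remaining point of care is that distinct tiles can be re-coloured independently, i.e.\ that swaps do not interfere across tile boundaries; this is automatic once the tiles are disjoint and each copy of $p_1$ lies in the interior of its tile.
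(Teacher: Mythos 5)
Your overall strategy coincides with the paper's: construct an interchangeable pair from the two diagonal border-gluings, arrange copies of it in a periodic tiling of the plane, and feed this into Proposition~\ref{pro:interchg} with rectangles of size at most $3w \times 3h$ to get the bound $(9wh)^{-1}$. However, there is a genuine gap at exactly the step you yourself flag as the ``main obstacle'': proving that the two patches are distinct. Your stated justification --- that some cell is covered by two \emph{different} translates of $q$ in the two coverings, and that ``since those translates come from the same primitive block $q$, they assign different colours to that cell'' --- is a non sequitur. Two distinct translates of $q$ can perfectly well agree at a given cell: the entire subject concerns overlapping occurrences of $q$ that agree on their overlaps, and primitivity does not forbid local coincidences of letters. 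Distinctness is a global statement: one must show the two colorings of the common domain differ \emph{somewhere}. The paper obtains this by observing that if the two coverings induced the same coloring, one could read off the identity $(b_1\ominus b_2)\obar q = q\obar(b_2\ominus b_1)$ (with $\ominus$ horizontal and $\obar$ vertical concatenation), which forces $q$ to be non-primitive by Theorem~3 of \cite{GamardRichommeShallitSmith2017} --- a two-dimensional analogue of the Lyndon--Sch\"utzenberger commutation theorem. This is a substantive imported result, not something recoverable from the ``careful analysis of the corner overlaps'' and the case split $h_1+h_2=h$ versus $h_1+h_2>h$ that you propose; without it, or an equivalent commutation lemma, your construction could in principle yield twice the same patch and prove nothing. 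Note also that this is the only place where primitivity of $q$ is actually used, so any repaired proof must invoke it precisely there.

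The remainder of your sketch is consistent with, if less explicit than, the paper's argument. The paper's pair is concrete: seven copies of $q$ (a ring of six around a central one), where the central copy sits in either of two horizontally shifted positions, both fillings of the hole having the same rectangular-union domain (Figure~\ref{fig:interpair}); this motif is then translated along a lattice so that every rectangle of size $(3w,3h)$ contains an occurrence (Figure~\ref{fig:periodic}), which is your part~(ii). One smaller criticism: your side remark that the case $a=w$ (full-width border) may be set aside because positive entropy is ``easier to obtain directly'' is asserted without proof and is also unnecessary --- the paper's construction handles overlapping or full-width borders uniformly, as the caption of Figure~\ref{fig:cassaigne} points out.
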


Figure~\ref{fig:cassaigne} illustrates the condition of
Theorem~\ref{thm:main2}.

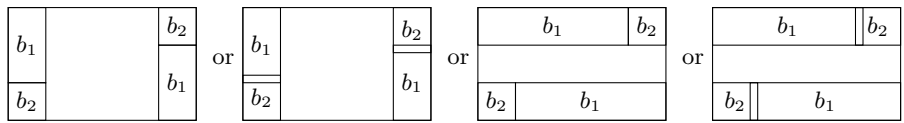
\begin{figure}[htbp]
  \centering
  
  \hfill
  \begin{tikzpicture}[scale=0.5,baseline={(current bounding box.center)}]
    \draw (0,0) rectangle ++(5,3);
    \draw (0,0) rectangle node{$b_2$} ++(1,1);
    \draw (0,1) rectangle node{$b_1$} ++(1,2);
    \draw (4,2) rectangle node{$b_2$} ++(1,1);
    \draw (4,0) rectangle node{$b_1$} ++(1,2);
  \end{tikzpicture}
  \hfill or\hfill
  \begin{tikzpicture}[scale=0.5,baseline={(current bounding box.center)}]
    \draw (0,0) rectangle ++(5,3);
    \draw (0,0) rectangle node{$b_2$} ++(1,1.2);
    \draw (0,1) rectangle node{$b_1$} ++(1,2);
    \draw (4,1.8) rectangle node{$b_2$} ++(1,1.2);
    \draw (4,0) rectangle node{$b_1$} ++(1,2);
  \end{tikzpicture}
  \hfill or\hfill
  \begin{tikzpicture}[scale=0.5,baseline={(current bounding box.center)}]
    \draw (0,0) rectangle ++(5,3);
    \draw (0,0) rectangle node{$b_2$} ++(1,1);
    \draw (1,0) rectangle node{$b_1$} ++(4,1);
    \draw (4,2) rectangle node{$b_2$} ++(1,1);
    \draw (0,2) rectangle node{$b_1$} ++(4,1);
  \end{tikzpicture}
  \hfill or\hfill
  \begin{tikzpicture}[scale=0.5,baseline={(current bounding box.center)}]
    \draw (0,0) rectangle ++(5,3);
    \draw (0,0) rectangle node{$b_2$} ++(1.2,1);
    \draw (1,0) rectangle node{$b_1$} ++(4,1);
    \draw (3.8,2) rectangle node{$b_2$} ++(1.2,1);
    \draw (0,2) rectangle node{$b_1$} ++(4,1);
  \end{tikzpicture}
  \hfill\hspace{0pt}
  \caption{Sufficient condition to have an interchangeable pair. Note
    that $b_1$ and $b_2$ may overlap.}
  \label{fig:cassaigne}
\end{figure}

\begin{proof}[of Theorem~\ref{thm:main2}]
  Without loss of generality, suppose $q$ satisfies Condition~1 of the
  theorem.  Figure~\ref{fig:interpair} shows an interchangeable pair
  for $q$. Indeed, one easily checks that both patterns have the same
  domain and are $q$-coverable. Moreover they are different, otherwise
  we would have
  $(b_1\ominus{}b_2)\obar{}q = q\obar{}(b_2\ominus{}b_1)$, with
  $\ominus$ denoting horizontal concatenation and $\obar$ denoting
  vertical concatenation. This situation implies that $q$ is not
  primitive by~\cite[Theorem~3]{GamardRichommeShallitSmith2017}.  This
  pair tiles the plane (see Figure~\ref{fig:periodic}), so
  Proposition~\ref{pro:interchg} implies that $X_q$ has strictly
  positive entropy. Moreover, the tiling on Figure~\ref{fig:periodic}
  shows that there is at least one occurrence of the pair in each
  rectangle of size $(3w, 3h)$, thence the bound on the entropy.

  \begin{figure}[htbp]
    \begin{minipage}[b]{0.4\linewidth}
      \centering
      \begin{tikzpicture}[scale=0.3] 
        \draw[fill=gray!50!white] (4,3) rectangle node{$q$} ++(5,3);
        \draw (0,1) rectangle node{$q$} ++(5,3);
        \draw (0,4) rectangle node{$q$} ++(5,3);
        \draw (4,0) rectangle node{$q$} ++(5,3);
        \draw (4,6) rectangle node{$q$} ++(5,3);
        \draw (9,2) rectangle node{$q$} ++(5,3);
        \draw (9,5) rectangle node{$q$} ++(5,3);
        \draw (4.5,3.5) node{$b_2$};
        \draw (4.5,2) node{$b_1$};
        \draw (4.5,6.5) node{$b_2$};
        \draw (4.5,5) node{$b_1$};
      \end{tikzpicture}

      \smallskip

      \begin{tikzpicture}[scale=0.3]
        \draw[fill=gray!50!white] (5,3) rectangle node{$q$} ++(5,3);
        \draw (0,1) rectangle node{$q$} ++(5,3);
        \draw (0,4) rectangle node{$q$} ++(5,3);
        \draw (4,0) rectangle node{$q$} ++(5,3);
        \draw (4,6) rectangle node{$q$} ++(5,3);
        \draw (9,2) rectangle node{$q$} ++(5,3);
        \draw (9,5) rectangle node{$q$} ++(5,3);
        \draw (4.5,2) node{$b_1$};
        \draw (4.5,6.5) node{$b_2$};
        \draw (9.5,4) node{$b_1$};
        \draw (9.5,5.5) node{$b_2$};
      \end{tikzpicture}
      \caption{An interchangeable pair for Theorem~\ref{thm:main2}.}
      \label{fig:interpair}
    \end{minipage}
    \hfill
    \begin{minipage}[b]{0.56\linewidth}
      \centering
      $\vdots$
      
      $\dots$
      \begin{tikzpicture}[baseline={(current bounding box.center)},scale=0.25]
        \begin{scope}
          \draw[fill=gray!50!white] (4,3) rectangle ++(5,3);
          \draw (0,1) rectangle ++(5,3);
          \draw (0,4) rectangle ++(5,3);
          \draw (4,0) rectangle ++(5,3);
          \draw (4,6) rectangle ++(5,3);
          \draw (9,2) rectangle ++(5,3);
          \draw (9,5) rectangle ++(5,3);
        \end{scope}
        \begin{scope}[xshift=9cm,yshift=1cm]
          \draw[fill=gray!50!white] (4,3) rectangle ++(5,3);
          \draw (0,1) rectangle ++(5,3);
          \draw (0,4) rectangle ++(5,3);
          \draw (4,0) rectangle ++(5,3);
          \draw (4,6) rectangle ++(5,3);
          \draw (9,2) rectangle ++(5,3);
          \draw (9,5) rectangle ++(5,3);
        \end{scope}
        \begin{scope}[xshift=0cm,yshift=6cm]
          \draw[fill=gray!50!white] (4,3) rectangle ++(5,3);
          \draw (0,1) rectangle ++(5,3);
          \draw (0,4) rectangle ++(5,3);
          \draw (4,0) rectangle ++(5,3);
          \draw (4,6) rectangle ++(5,3);
          \draw (9,2) rectangle ++(5,3);
          \draw (9,5) rectangle ++(5,3);
        \end{scope}
        \begin{scope}[xshift=9cm,yshift=7cm]
          \draw[fill=gray!50!white] (4,3) rectangle ++(5,3);
          \draw (0,1) rectangle ++(5,3);
          \draw (0,4) rectangle ++(5,3);
          \draw (4,0) rectangle ++(5,3);
          \draw (4,6) rectangle ++(5,3);
          \draw (9,2) rectangle ++(5,3);
          \draw (9,5) rectangle ++(5,3);
        \end{scope}
      \end{tikzpicture}
      $\dots$
      
      $\vdots$
      \caption{The interchangeable pair for Theorem~\ref{thm:main2} tiles the plane.}
      \label{fig:periodic}
    \end{minipage}
  \end{figure}
\end{proof}

\section{Conclusion}
\label{sec:conclu}

\smallskip\noindent \textbf{Results.} %
We showed that the set of $q$-coverable $\zz^2$-words is a subshift of
finite type, for all $q$. Then we gave a necessary condition and a
sufficient condition on $q$ for this subshift to have strictly
positive entropy. These conditions were not quite identical, but
close; we also gave a lower bound on the entropy when the sufficient
condition is satisfied. This lower bound used the concept of
interchangeable pair; we showed that any coverable subshift with
strictly positive entropy has interchangeable pairs.

\smallskip\noindent \textbf{Open problems.} %
Our work may be extended in various directions. The first direction is
to generalize the notion of coverability: allow non-rectangular
shapes, allow two covers instead of one (as a disjunction of covers),
or even allow some minimal distance between the covers (``negative
overlaps'', in a sense). Connections with the recurrence function
could be established.

Another direction of further work is to close the gap between our
necessary and our sufficient condition, and to give more precise
bounds on the value of the entropy in function of $q$.

\smallskip\noindent \textbf{Acknowledgements.} %
I would like to thank %
Julien Cassaigne for the helpful discussions that lead to the proof of
Theorem~\ref{thm:main2}; %
Benjamin Hellouin for his simplification of the proof of
Proposition~\ref{pro:qpsft}; %
and %
Mikhail Vyalyi, who read an early draft of this paper and made many
useful suggestions and comments.

\bibliographystyle{splncs04}
\bibliography{ent2d}{}

\begin{thebibliography}{10}
\providecommand{\url}[1]{\texttt{#1}}
\providecommand{\urlprefix}{URL }
\providecommand{\doi}[1]{https://doi.org/#1}

\bibitem{AnselmoGiammarresiMadonia2017}
Anselmo, M., Giammarresi, D., Madonia, M.: Avoiding overlaps in pictures. In:
  Descriptional Complexity of Formal Systems - 19th {IFIP} {WG} 1.02
  International Conference, {DCFS} 2017, July 3--5, Milano, Italy. pp. 16--32
  (2017)

\bibitem{ApostolicoEhrenfeucht1993}
Apostolico, A., Ehrenfeucht, A.: Efficient detection of quasiperiodicities in
  strings. Theor. Comput. Sci.  \textbf{119}(2),  247--265 (1993)

\bibitem{CrochemoreIliopoulosKorda1998}
Crochemore, M., Iliopoulos, C.S., Korda, M.: Two-dimensional prefix string
  matching and covering on square matrices. Algorithmica  \textbf{20}(4),
  353--373 (1998)

\bibitem{GamardRichomme2015}
Gamard, G., Richomme, G.: Coverability in two dimensions. In: Language and
  Automata Theory and Applications - 9th International Conference, {LATA} 2015,
  March 2--6, Nice, France. pp. 402--413 (2015)

\bibitem{GamardRichomme2016}
Gamard, G., Richomme, G.: Determining sets of quasiperiods of infinite words.
  In: 41st International Symposium on Mathematical Foundations of Computer
  Science, {MFCS} 2016, August 22-26, Krak{\'{o}}w, Poland. pp. 40:1--40:13
  (2016)

\bibitem{GamardRichomme2017}
Gamard, G., Richomme, G.: Coverability and multi-scale coverability on infinite
  pictures. Journal of Compututer and System Sciences  (2017), accepted

\bibitem{GamardRichommeShallitSmith2017}
Gamard, G., Richomme, G., Shallit, J., Smith, T.J.: Periodicity in rectangular
  arrays. Information Processing Letters  \textbf{118},  58--63 (2017)

\bibitem{GlenLeveRichomme2008}
Glen, A., Lev{\'{e}}, F., Richomme, G.: Quasiperiodic and {L}yndon episturmian
  words. Theor. Comput. Sci.  \textbf{409}(3),  578--600 (2008)

\bibitem{Gummelt1996}
Gummelt, P.: Penrose tilings as coverings of congruent decagons. Geometriae
  Dedicata  \textbf{62},  1--17 (1996)

\bibitem{Lieb1967}
Lieb, E.: Residual entropy of square ice. Phys. Rev.  \textbf{162}(1),
  162--172 (1967)

\bibitem{Marcus2004}
Marcus, S.: Quasiperiodic infinite words (columns: Formal language theory).
  Bulletin of the {EATCS}  \textbf{82},  170--174 (2004)

\bibitem{MarcusMonteil2006}
Monteil, T., Marcus, S.: Quasiperiodic infinite words: multi-scale case and
  dynamical properties. Arxiv  \textbf{math/0603354} (2006)

\bibitem{MorseHedlund1940}
Morse, M., Hedlund, G.A.: Symbolic dynamics {II}: Sturmian trajectories. Amer.
  J. Math.  \textbf{62}(1),  1--42 (1940)

\bibitem{Penrose1978}
Penrose, R.: Pentaplexity: A class of non-periodic tilings of the plane. Eureka
   \textbf{39} (1978)

\bibitem{Robinson1971}
Robinson, R.M.: Undecidability and nonperiodicity for tilings of the plane.
  Invent. Math.  \textbf{12}(3) (1971)

\end{thebibliography}

\end{document}